\def\R{\mathbb{R}}
\def\E{\mathbb{E}}
\def\d{\hbox{d}}
\begin{document}

\title{On traveling wave solutions to Hamilton-Jacobi-Bellman equation with inequality constraints\thanks{The first author (NI) is partially supported by Grant-in-Aid for Scientific Research (C) No. 21540117 from Japan Society for the Promotion of Science (JSPS), the second author (DS) is grateful to support  of VEGA 1/0747/12 grant and great hospitality during his visit of Hitotsubashi University in Tokyo.
}
}

\titlerunning{Traveling wave solutions to a Hamilton--Jacobi--Bellman equation}

\author{Naoyuki~ISHIMURA \and Daniel~\v{S}EV{\v C}OVI{\v C}}

\institute{
N. Ishimura \at Graduate School of Economics, 
Hitotsubashi University, Kunitachi, Tokyo 186-8601, Japan.
\email{ishimura@econ.hit-u.ac.jp}
\and
D. \v{S}ev\v{c}ovi\v{c} \at
Department of Applied Mathematics 
and Statistics, Faculty of Mathematics, Physics and Informatics, Comenius 
University, 842 48 Bratislava, Slovakia.
\email{sevcovic@fmph.uniba.sk}
}

\date{Received: date / Accepted: date}

\maketitle

\begin{abstract} 
The aim of this paper is to construct and analyze solutions to a class of Hamilton--Jacobi--Bellman equations with range bounds on the optimal response variable. Using the Riccati transformation we derive and analyze a fully nonlinear parabolic partial differential equation for the  optimal response function. We construct monotone traveling wave solutions and identify parametric regions for which the traveling wave solution has a positive or negative wave speed. 

\keywords{Hamilton-Jacobi-Bellman equation \and traveling wave solution \and
Riccati transformation \and stochastic dynamic programming }
\subclass{35K55 \and 34E05 \and 70H20 \and 91B70 \and 90C15 \and 91B16}
\end{abstract}

\maketitle

\section{Introduction}
\label{sec:intro}

\noindent
The purpose of this paper is to analyze special solutions to a fully nonlinear  partial differential equation which can be derived from the  Hamilton--Jacobi--Bellman (HJB) equation for the value function arising in a class of optimal allocation problems. In many practical stochastic dynamic optimization problems, the goal is to maximize the expected value of the terminal utility. More precisely, let us suppose that 
$X=X_t^\theta, t\in[0,T],$ is a stochastic process satisfying a stochastic 
differential equation (SDE):
\begin{equation}
\d X_t^\theta = \mu_t^\theta(X_t^\theta) \d t + \sigma_t^\theta(X_t^\theta) \d W_t,
\label{genproc}
\end{equation}
where $\mu_t^\theta$ and $\sigma_t^\theta>0$ are the drift and volatility of  
It\=o's stochastic process (\ref{genproc}). Here $W_t$ $(t\ge0)$ denotes the 
standard Wiener process. The goal is to find an optimal response strategy 
$\{\theta\}=\{\theta_t \,|\, t\in[0,T]\}$ belonging to a set ${\mathcal A}$ of 
admissible strategies and yielding the maximal expected utility from the terminal
value $X_T^\theta$, i.e., 
\begin{equation}
\max_{\{\theta\}\in{\mathcal A}} \E \left[u(X_T^\theta) | X_0^\theta=x \right].
\label{maxproblem}
\end{equation}
In this paper we consider the case when the optimal response strategy $\theta$ is restricted by the unity  from above, i.e., ${\mathcal A} = \{ \{\theta\} \,|\, \theta_t \le 1, \; 0\le t \le T\}$. The function $u$ represents the terminal utility function. Throughout the paper we shall assume that $u$ is a strictly increasing and concave function, i.e., $u'(x)>0$ and $u''(x)<0$ for all $x\in\R$.

It follows from the theory of stochastic dynamic programming (see e.g. \cite{D}) that problem (\ref{maxproblem}) can be solved by introducing the so-called value function
\begin{equation}\label{VALUE}
V(x,t):=\sup_{\{\theta\}\in{\mathcal A}} \E[u(X_T^\theta) \,|\, X_t^\theta=x].
\end{equation}
Using the Bellman optimality principle combined with the tower law of conditioned 
expectations it can be shown that the value function satisfies the so-called 
Hamilton-Jacobi-Bellman (HJB) equation 
\begin{equation}
 \frac{\partial V}{\partial t}(x,t) + \max_{\{\theta\}\in{\mathcal A} } 
 \Big\{ \frac{(\sigma_t^\theta(x))^2}{2} \frac{\partial^2 V}{\partial x^2}(x,t) 
 +\mu_t^\theta(x)\frac{\partial V}{\partial x}(x,t) \Big\} = 0 , \quad 
V(x,T)=u(x), 
\label{HJB-gen}
\end{equation}
for all $x\in\R$ and $t\in[0,T)$ (see e.g. \cite{AI,D,IM,MS}). 

The main goal of this paper is to construct monotone traveling wave solutions to the HJB equation (\ref{HJB-gen}) subject to the constraint ${\mathcal A}= \{ \{\theta\} \,|\,  \theta_t \le 1, \; 0\le t\le T\}$ for the optimal decision policy $\{\theta\}$. Depending on the models considered, we show the existence of traveling wave solutions with positive as well as negative wave speeds. 

This paper is organized as follows. In the next section we investigate a simple 
HJB equation with drift and volatility functions linearly depending  on the optimal decision parameter $\theta$. Using a Riccati-like transformation we transform the HJB equation, originally stated for the value function $V$ into a fully nonlinear parabolic PDE for the reciprocal value of the optimal response function $\theta$. We construct a traveling wave solution with a decreasing wave profile, and we extend the results to the case when the underlying processes is governed by a SDE with a drift quadratically depending on the parameter $\theta$. In section 3 we investigate a more general HJB equation with a volatility function depending nonlinearly on the optimal decision policy parameter $\theta$. We again identify a range of model parameters for which a traveling wave solution exists and has a monotonically increasing profile.

\newpage
\section{Construction of a traveling wave solution to the HJB equation with a positive wave speed}
\label{sec:section2}

\noindent
In this section, we focus our attention to traveling wave solutions to the HJB equation. First, we shall examine a simplified model where the drift $\mu_t^\theta (X_t)$ and volatility $\sigma_t^\theta (X_t)$ are linear functions in $\theta$. Then we generalize the results to the HJB equation (\ref{HJB-gen}) with an underlying stochastic process satisfying a SDE (\ref{genproc}) with a drift function quadratically depending on the control parameter $\theta$.

\subsection{ A simple HJB equation}
\label{sec:simpleHJB}
\noindent

 In what follows, we shall analyze  a simplified model capturing  essential features  of a more complex model with general drift and volatility functions $\mu$ and  $\sigma$. We assume $X_t^\theta$ is a Brownian motion with drift $\mu^\theta = \omega \theta$ and volatility $\sigma^\theta=\theta>0$, i.e., 
\[
\d X_t^\theta = \omega \theta  \d t + \theta \d W_t,
\]
where $\omega>0$ is a positive parameter. We restrict our strategy $\theta$ by $1$ 
from above. Then the corresponding Hamilton--Jacobi--Bellman equation (\ref{VALUE}) for 
the value function $V(x,t)$ reads as follows:
\begin{equation}\label{HJB}
 \frac{\partial V}{\partial t}(x,t) + \sup_{\theta \leq 1} 
 \Big\{ \frac{1}{2}\theta^2 \frac{\partial^2 V}{\partial x^2}(x,t) 
 +\omega \theta\frac{\partial V}{\partial x}(x,t) \Big\} = 0 , \qquad 
V(x,T)=u(x) .  
\end{equation}
Suppose, for a moment, that (\ref{HJB}) has a classical solution $V$ such that 
$\partial_x V(x,t) > 0$, and $\partial^2_x V(x,t) < 0$ for all $x\in\R$ and $t\in[0,T]$. For justification of such an assumption we refer the reader to Proposition~\ref{prop:concavity}. Let us denote by $\theta^{*}(x,t)$ the optimal 
response strategy at $(x,t)$. It maximizes the function
\[
\R \ni \theta \mapsto \frac{1}{2}\theta^2 \frac{\partial^2 V}{\partial x^2}(x,t) 
 +\omega \theta\frac{\partial V}{\partial x}(x,t) \in \R
\]
subject to the constraint $\theta \le 1$. If the optimal response satisfies $\theta^{*}(x,t)<1$ then we have  
\begin{equation}\label{OP}
\theta^{*}(x,t)=- \omega \frac{\partial_x V(x,t)}{\partial^2_x V(x,t)} .  
\end{equation}
Placing (\ref{OP}) back into (\ref{HJB}) we obtain 
\begin{equation}\label{PDE}
\begin{split}
\frac{\partial V}{\partial t} 
 - \frac{\omega^2}{2}\frac{(\partial_x V)^2}{\partial^2_x V} 
 = 0  \quad\text{for  } 0 < t < T  , \qquad  
V(T,x) = u(x) .  
\end{split}
\end{equation}

Following \cite{AI} and \cite{MS} (see also \cite{IM}), we introduce the Riccati-like transformation
\begin{equation}\label{TR}
\varphi(x,t):= -\frac{1}{\omega} \frac{\partial^2_x V(x,t)}{\partial_x V(x,t)}.
\end{equation}
Recall that, in the context of optimal portfolio allocation problems, the function $\varphi$ is related to the so-called Arrow-Pratt coefficient of the 
absolute risk aversion (c.f. \cite{MMG}, \cite{P}). Performing straightforward 
calculations, the evolution equation for $\varphi$ becomes 
\begin{equation}\label{EQ1}
\frac{\partial\varphi}{\partial t} + \frac{1}{2}
\frac{\partial}{\partial x}\Big(\frac{1}{\varphi^2}
\frac{\partial\varphi}{\partial x}\Big) = 0.
\end{equation}

In view of the optimal response function $\theta^{*} = 1/\varphi$, equation (\ref{EQ1}) is fulfilled by $\varphi$ in the region $\{(x,t),\ \varphi(x,t) > 1\}$. On the other hand, if $\varphi(x,t) <1$, then the maximum in (\ref{HJB}) is attained at $\theta^* = 1$. Inserting $\theta=1$ into (\ref{HJB}) we end up with an equation for $V(x,t)$ of the form:
\begin{equation*}
 \frac{\partial V}{\partial t}
 + \frac{1}{2}\frac{\partial^2 V}{\partial x^2} 
 + \omega \frac{\partial V}{\partial x} = 0 , \qquad 
V(x,T)=u(x) .  
\end{equation*}
In terms of the transformed function $\varphi$, it can be further reduced to 
\begin{equation}\label{EQ2}
\frac{\partial\varphi}{\partial t} + \frac{1}{2}\frac{\partial}{\partial x}
\Big(\frac{\partial\varphi}{\partial x} - \omega (1-\varphi)^2 \Big) = 0 . 
\end{equation}
It is easy to see that equation (\ref{EQ2}) is satisfied in the region $\{(x,t),\ \varphi(x,t) < 1\}$. 

Combining equations (\ref{EQ1}) and (\ref{EQ2}) allows us to rewrite them in a compact form 
\begin{equation}\label{EQU}
\frac{\partial\varphi}{\partial t} + 
\frac{\partial^2}{\partial x^2} A(\varphi) 
+
\frac{\partial}{\partial x} B(\varphi)=0, \quad x\in\R,\ t\in(0,T),
\end{equation}
where  
\begin{equation}
A(\varphi)=\left\{
\begin{array}{cc}
\frac{1}{2}\varphi, &  \quad \hbox{for}\ \ \varphi \le 1, \\ 
1-\frac{1}{2\varphi}, & \quad \hbox{for}\ \ \varphi >1,
\end{array} 
\right.
\qquad
B(\varphi)=\left\{
\begin{array}{cc}
-\frac{\omega}{2} (1-\varphi)^2, &  \quad \hbox{for}\ \ \varphi \le 1, \\ 
0, & \quad \hbox{for}\ \ \varphi >1.
\end{array} 
\right.
\label{functionAB}
\end{equation}
Notice that $A(\varphi)$ and $B(\varphi)$ are increasing and $C^1$ continuous functions. 

\subsubsection{Construction of a traveling wave solution with a positive wave speed}
\label{sec:riskseeking}

In this subsection we shall construct a traveling wave solution to (\ref{EQU}) of 
the form 
\[
\varphi(x,t) = v(x+c(T-t)), \qquad x\in\R, \ t\in [0,T],
\]
with the  wave speed $c \in \R$ where $v=v(\xi)$ is a $C^1$ function defined on $\R$. 

Inserting the aforementioned ansatz on the solution $\varphi$ into  (\ref{EQU}) we conclude the existence of a constant $K_0\in \R$ such that the function $v=v(\xi)$ fulfills the identity
\begin{equation}
-c v(\xi) + \frac{d}{d\xi}(A(v(\xi))) +  B(v(\xi))  = K_0,
\label{travelwaveEQU}
\end{equation}
for all $\xi\in\R$. Let us introduce an auxiliary function $z=z(\xi)$ as follows:
\[
z(\xi) = A(v(\xi)).
\]
Then $\varphi(x,t) = v( x + c (T-t))$ is a traveling wave solution to (\ref{EQU}) if and only if the function $z$ is a solution to the ODE:
\begin{equation}
z^\prime(\xi) = F (z(\xi)),
\label{ode-riskseeking}
\end{equation}
where $F(z) = K_0 + c A^{-1}(z) - B( A^{-1}(z))$. Clearly,
\begin{equation}
F(z)=\left\{
\begin{array}{cc}
 K_0 +  2c z + \frac{\omega}{2} (1-2z)^2, &  \quad \hbox{for}\ \ z<\frac{1}{2}, \\ 
K_0 + \frac{c}{2(1-z)}, & \quad \hbox{for}\ \ \frac{1}{2}\le z<1.
\end{array} 
\right.
\label{functionF}
\end{equation}
Notice that the function $F$ is $C^1$ continuous for $z<1$. Its graph, for the case 
when $c>0$ and $K_0 + c <0$, is depicted in Fig.~\ref{fig:functionF}. In this case 
the function $F$ has exactly two roots $z^\pm$ such that $F(z^\pm)=0$ and 
$0<z^+< 1/2< z^-<1$, where
\begin{equation}
z^- =  1+ \frac{c}{2 K_0}, \qquad 
z^+ = \frac{1}{2}- \frac{c}{2\omega} - \frac{1}{2}\sqrt{c^2/\omega^2 - 2(c+K_0)/\omega}.
\label{zroots}
\end{equation}
We have $F^\prime(z^-) >0$ and $F^\prime(z^+) <0$ and $F(z)<0$ for 
$z^+<z<z^-$. Therefore, up to a shift in the argument $\xi$, there exists a unique solution $z=z(\xi)$ to (\ref{ode-riskseeking}) connecting the steady states $z^\pm$ such that 
\[
\lim_{\xi\to\pm \infty} z(\xi) = z^\pm, \quad 
z^+ < z(\xi) < z^-, \quad z^\prime(\xi) <0,\ \xi\in\R .
\]
The traveling wave profile $v=v(\xi)$ is given by $v(\xi) = A^{-1} (z(\xi))$. It satisfies:
\[
\lim_{\xi\to\pm \infty} v(\xi) = v^\pm, \quad 
v^+ < v(\xi) < v^-, \quad v^\prime(\xi) <0, \ \xi\in\R, 
\]
with $v^\pm = A^{-1}(z^\pm)$. Clearly $0<v^+< 1< v^-$. On the other hand, we can 
prescribe the limiting values $v^\pm$ and calculate the corresponding wave speed $c>0$ and constant $K_0$. Indeed, it is straightforward to verify that, for $0<v^+< 1< v^-$, the wave speed $c$ and $K_0$ are given by formulae:
\begin{equation}
c= \frac{\omega}{2} \frac{(1-v^+)^2}{v^- - v^+}, \qquad 
K_0 = -c v^-.
\label{speed}
\end{equation}

Since $F$ is a $C^1$ smooth nonlinear function we obtain that the solution $z$ is a $C^2$ smooth function in the $\xi$ variable. However, as $A$ is just $C^1$ smooth we obtain that the traveling wave profile $v$ is a $C^1$ smooth function in the $\xi$ variable only. As a consequence, we have that the solution $\varphi(x,t) = v(x + c (T-t))$ is $C^1$ smooth and it is a weak solution to (\ref{EQU}) in the usual sense. 

\begin{figure}
\begin{center}
\includegraphics[width=0.35\textwidth]{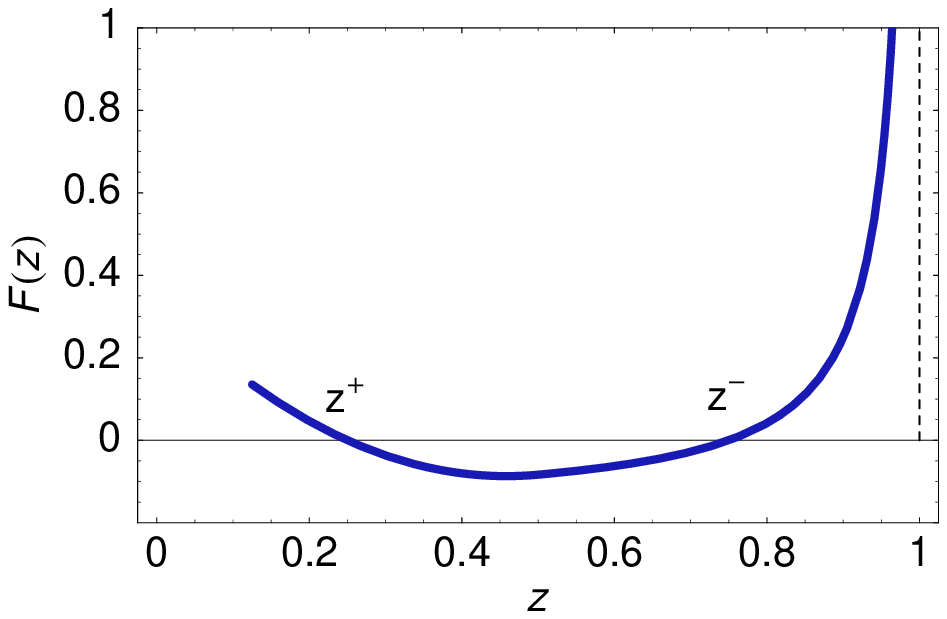}
\includegraphics[width=0.35\textwidth]{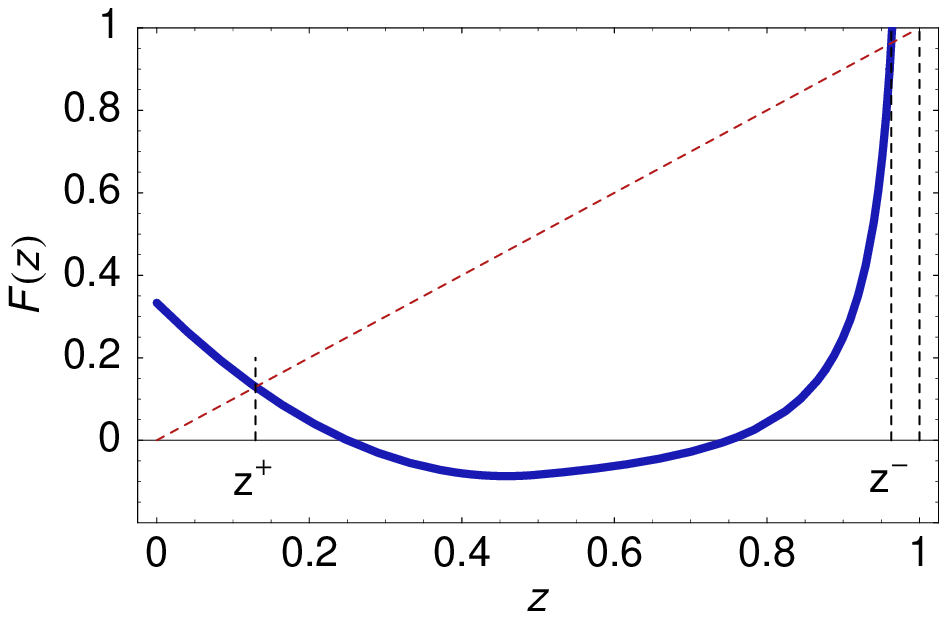}
\end{center}
\caption{
A graph of the function $F(z)$ in the case $c>0$ and $K_0+c<0$.
 The roots $z^\pm$ of the function $F(z)$ (left) and those of the function $\tilde F(z)=F(z)-z$ (right).
}
\label{fig:functionF} 
\end{figure}

\subsection{ A HJB equation for a quadratic drift function}
\label{sec:portfolioHJB}

\noindent
In this section  we shall assume the underlying stochastic process satisfying a SDE (\ref{genproc}) with a drift function $\mu^\theta$ quadratically depending on the parameter $\theta$, i.e. $\mu^\theta :=\omega\theta -\frac12 \theta^2$. The volatility is again assumed to be linear in $\theta$, $\sigma^\theta:=\theta$. The Hamilton--Jacobi--Bellman equation (\ref{VALUE}) for the value function $V$ has the form 
\begin{equation}\label{HJBX}
 \frac{\partial V}{\partial t}(x,t) + \sup_{\theta \leq 1} 
 \Big\{ \frac{1}{2}\theta^2 \frac{\partial^2 V}{\partial x^2}(x,t) 
 +\left(\omega \theta - \frac{1}{2}\theta^2\right) \frac{\partial V}{\partial x}(x,t) \Big\} = 0 , \quad 
V(x,T)=u(x), 
\end{equation}
In this case, the optimal response strategy $\tilde\theta^{*}(x,t)$ is given by 
\begin{equation}\label{OPX}
\tilde\theta^{*}(x,t)=- \omega \frac{\partial_x V(x,t)}{\partial^2_x V(x,t) - \partial_x V(x,t)}, 
\end{equation}
provided that $\tilde\theta^{*}(x,t)<1$. The function $V$ solves the nonlinear PDE:
\begin{equation}\label{PDEX}
\begin{split}
\frac{\partial V}{\partial t} 
 - \frac{\omega^2}{2}\frac{(\partial_x V)^2}{\partial^2_x V - \partial_x V} 
 = 0  \quad\text{for  } 0 < t < T  , \quad  
V(T,x) = u(x),\ x\in\R .  
\end{split}
\end{equation}
Applying the Riccati-like transformation
\begin{equation}\label{TRX}
\tilde \varphi(x,t):= -\frac{1}{\omega} \frac{\partial^2_x V(x,t) - \partial_x V(x,t)}{\partial_x V(x,t)} 
 = - \frac{1}{\omega} \left( \frac{\partial^2_x V(x,t)}{\partial_x V(x,t)} - 1 \right), 
\end{equation}
it is easy to verify that the equation for $\tilde \varphi$ reads as follows: 
\begin{equation}\label{EQ1X}
\frac{\partial\tilde\varphi}{\partial t} + \frac{1}{2}
\frac{\partial}{\partial x}\Big(\frac{1}{\tilde\varphi^2}
\frac{\partial\tilde\varphi}{\partial x}\Big) -\frac{\partial}{\partial x} \frac{1}{2\tilde\varphi}= 0,
\end{equation}
and it is satisfied by $\tilde\varphi$ in the region $\{(x,t),\ \tilde\varphi(x,t) > 1\}$. If  $\tilde\varphi(x,t) <1$, then the maximum in (\ref{HJBX}) is attained at $\tilde\theta^* = 1$. In this case, the equation for the solution $V(x,t)$ and its Riccati transformation $\tilde\varphi$ are as follows:
\begin{equation*}
 \frac{\partial V}{\partial t}
 + \frac{1}{2}\frac{\partial^2 V}{\partial x^2} 
 + (\omega-\frac{1}{2}) \frac{\partial V}{\partial x} = 0 , \qquad 
V(x,T)=u(x),
\end{equation*}
and
\begin{equation}\label{EQ2X}
\frac{\partial\tilde\varphi}{\partial t} + \frac{1}{2}\frac{\partial}{\partial x}
\Big(\frac{\partial\tilde\varphi}{\partial x} - \omega (1-\tilde\varphi)^2 + \tilde\varphi\Big) = 0, 
\end{equation}
provided that $\tilde\varphi(x,t) < 1$. Hence we can rewrite the equation for $\tilde\varphi$ as follows:
\begin{equation}\label{EQUX}
\frac{\partial\tilde\varphi}{\partial t} + 
\frac{\partial^2}{\partial x^2} \tilde A(\tilde\varphi) 
+
\frac{\partial}{\partial x} \tilde B(\tilde\varphi)=0, \quad x\in\R,\ t\in(0,T),
\end{equation}
where $\tilde A(\varphi)=A(\varphi)$ and $\tilde B(\varphi)= B(\varphi) + A(\varphi)$ ($A$ and $B$ are defined in (\ref{functionAB})).

Next, following the analysis from section~\ref{sec:riskseeking}, we can construct a traveling wave solution to (\ref{EQUX}) of the form 
\[
\tilde\varphi(x,t) = \tilde v(x+\tilde c(T-t)), \qquad x\in\R, \ t\in [0,T],
\]
 with the wave speed $\tilde c \in \R$ and the profile $\tilde v=\tilde v(\xi)$. Since $\tilde A \equiv A$ and $\tilde B \equiv B + A$ the transformed wave profile $\tilde z(\xi) = A(\tilde v(\xi))$ should satisfy the ODE:
\begin{equation}
\tilde z^\prime(\xi) = \tilde F (\tilde z(\xi)),
\label{ode-riskseekingX}
\end{equation}
where $\tilde F(z) = F(z) - z$ and $F$ is defined by (\ref{functionF}), i.e. $F(z)=K_0 +\tilde c A^{-1}(z) - B( A^{-1}(z) )$. Here $K_0$ is a constant. 
In Fig.~\ref{fig:functionF} (right) we plot the graph of a function $F(z)$ for  $\tilde c>0$ and $K_0 + \tilde c <0$. In such a situation, the function $\tilde F$ has exactly two roots $0<\tilde z^+< 1/2< \tilde z^-<1$ such that $F(\tilde z^\pm)=\tilde z^\pm$. Furthermore, $\tilde F^\prime(\tilde z^-) >0$ and $\tilde F^\prime(\tilde z^+) <0$. As in the previous section~\ref{sec:riskseeking}, there exists, up to a shift in the argument $\xi$, a unique solution $\tilde z=\tilde z(\xi)$ connecting the steady states, i.e. $\lim_{\xi\to\pm \infty} \tilde z(\xi) = \tilde z^\pm$. The corresponding traveling wave profile $\tilde v=\tilde v(\xi)$ given by $\tilde v(\xi) = A^{-1} (\tilde z(\xi))$ satisfies:
\[
\lim_{\xi\to\pm \infty} \tilde v(\xi) = \tilde v^\pm, \quad 
\tilde v^+ < \tilde v(\xi) < \tilde v^-, \quad \tilde v^\prime(\xi) <0, \quad  \xi\in \R, 
\]
where $\tilde v^\pm = A^{-1}(\tilde z^\pm)$, $0<\tilde v^+< 1< \tilde v^-$. Again we can prescribe the limiting values $0<\tilde v^+< 1< \tilde v^-$ and calculate the corresponding wave speed $\tilde c>0$ and constant $K_0$. Since $0=\tilde F(\tilde z^\pm) = K_0 +  \tilde c \tilde v^\pm  - \tilde B(\tilde v^\pm)$ we have $\tilde c = (\tilde B(\tilde v^-) - \tilde B(\tilde v^+))/(\tilde v^- - \tilde v^+)$ and $K_0= -\tilde c \tilde v^- + \tilde B(\tilde v^-)$. Since $\tilde B \equiv B + A$ we have
\begin{equation}
\tilde c=  \frac{\frac{\omega}{2} (1-\tilde v^+)^2 + 1 -\frac{1}{2\tilde v^-} -\frac{\tilde v^+}{2} }{\tilde v^- - \tilde v^+}. 
\label{speedX}
\end{equation}

Summarizing the results of this section we conclude the following theorem.

\begin{theorem}\label{theorem:riskseeking}
For any limiting values $0<v^+< 1< v^-$ there exists a speed $c>0$ given by (\ref{speed}) 
such that the Hamilton-Jacobi-Bellman equation (\ref{HJB}) with a range bound 
$\{\theta\le 1\}$ has a solution $V(x,t)$ such that the optimal response function 
$\theta^*(x,t)$ given by (\ref{TR}) has the form of $\theta^* = \min(1, 1/\varphi)$ where  $\varphi(x,t) = v(x+c(T-t))$ and $v(\xi)$ is a $C^1$ smooth and strictly decreasing function, $\lim_{\xi\to\pm \infty} v(\xi) = v^\pm$.

The statement remains true if we consider the HJB equation (\ref{HJBX}) for the reduced optimal portfolio selection problem with the traveling wave profile $\tilde \varphi(x,t) = \tilde v(x+\tilde c(T-t))$, $\lim_{\xi\to\pm \infty} \tilde v(\xi) = \tilde v^\pm$ and the optimal response function $\tilde \theta^* = \min(1, 1/\tilde \varphi)$. Here the wave speed $\tilde c>0$ is given by (\ref{speedX}) for any prescribed limiting values $0<\tilde v^+< 1< \tilde v^-$.

\end{theorem}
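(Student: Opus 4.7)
The plan is to collect the constructions carried out in subsections~\ref{sec:riskseeking} and~\ref{sec:portfolioHJB} into a single existence statement and then show that, conversely, prescribing the limits $v^\pm$ determines the wave speed uniquely through~(\ref{speed}). First, I would insert the ansatz $\varphi(x,t)=v(x+c(T-t))$ into the transport-diffusion equation~(\ref{EQU}) and integrate once in $\xi$; this reduces~(\ref{EQU}) to the first-order identity~(\ref{travelwaveEQU}) with integration constant $K_0$. The substitution $z(\xi)=A(v(\xi))$ then yields the autonomous scalar ODE~(\ref{ode-riskseeking}) with $F$ given by~(\ref{functionF}). Since $A$ is strictly increasing and $C^1$, the passage $z\leftrightarrow v$ is a homeomorphism of $(0,1)$ onto $(0,1)$, so it suffices to analyze~(\ref{ode-riskseeking}).

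Next, given $0<v^+<1<v^-$, I would define $c$ and $K_0$ by the two algebraic conditions $F(A(v^\pm))=0$, which upon using the explicit formula~(\ref{functionF}) collapse to~(\ref{speed}); positivity $c>0$ and the inequality $K_0+c<0$ are then immediate from $0<v^+<1<v^-$. With these constants fixed, $F$ is $C^1$ on $(-\infty,1)$, has exactly the two zeros $z^\pm=A(v^\pm)$ with $F'(z^-)>0$ and $F'(z^+)<0$, and $F<0$ on $(z^+,z^-)$. Standard phase-line analysis for the autonomous ODE $z'=F(z)$ therefore yields a unique (up to translation in $\xi$) heteroclinic orbit $z\colon\R\to(z^+,z^-)$ with $z'<0$ and $\lim_{\xi\to\pm\infty}z(\xi)=z^\pm$. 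Setting $v(\xi)=A^{-1}(z(\xi))$ gives the $C^1$ strictly decreasing profile with the prescribed limits $v^\pm$, whence $\varphi(x,t)=v(x+c(T-t))$ is a $C^1$ weak solution of~(\ref{EQU}).

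To return to the HJB equation, I would reconstruct $V$ from $\varphi$ by inverting the Riccati-type transformation~(\ref{TR}): the identity $\partial_x^2V=-\omega\,\varphi\,\partial_xV$ integrates to $\partial_xV(x,t)=P(t)\exp(-\omega\int_0^x\varphi(s,t)\,ds)$ for an arbitrary positive function $P(t)$, and a second integration in $x$ (with the terminal condition $V(\cdot,T)=u$ used to normalize $P(T)$ and the second constant of integration) yields a classical solution $V$ with $\partial_xV>0$, $\partial_x^2V<0$. Because $v^+<1<v^-$ and $v$ is monotone, there is a unique $\xi_0$ with $v(\xi_0)=1$; on the set $\{\varphi>1\}$ the derivation of~(\ref{EQ1}) shows that the interior optimizer~(\ref{OP}) is admissible and equals $1/\varphi<1$, while on $\{\varphi\le 1\}$ equation~(\ref{EQ2}) corresponds precisely to imposing the boundary value $\theta^*=1$ in~(\ref{HJB}). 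This gives the announced form $\theta^*=\min(1,1/\varphi)$.

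For the second assertion I would repeat the same program with $\tilde A=A$ and $\tilde B=B+A$, using~(\ref{EQUX}) in place of~(\ref{EQU}). The traveling wave ansatz now produces $\tilde z'=\tilde F(\tilde z)$ with $\tilde F(z)=F(z)-z$, whose zeros are characterized by $F(\tilde z^\pm)=\tilde z^\pm$; the graph shown in Fig.~\ref{fig:functionF} (right) confirms the same sign structure $\tilde F'(\tilde z^-)>0$, $\tilde F'(\tilde z^+)<0$, $\tilde F<0$ on $(\tilde z^+,\tilde z^-)$ whenever $\tilde c>0$ and $K_0+\tilde c<0$. The only computational step that differs is solving the two equations $\tilde F(\tilde z^\pm)=0$ for $(\tilde c,K_0)$, which gives~(\ref{speedX}) after substituting the explicit form of $\tilde B=B+A$. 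The main technical point in the whole argument is verifying that the algebraic prescription~(\ref{speed}) (respectively~(\ref{speedX})) genuinely produces the required root configuration of $F$ (respectively $\tilde F$) for every admissible pair $v^+<1<v^-$; once that is done, the ODE phase-line argument and the back-transformation to $V$ are routine.
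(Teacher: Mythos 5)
Your proposal follows essentially the same route as the paper: Theorem~\ref{theorem:riskseeking} is stated there as a summary of the construction already carried out in subsections~\ref{sec:riskseeking} and~\ref{sec:portfolioHJB} (traveling-wave ansatz, one integration in $\xi$ to reach (\ref{travelwaveEQU}), the substitution $z=A(v)$, phase-line analysis of $z'=F(z)$ resp.\ $\tilde z'=\tilde F(\tilde z)$, and determination of $c$, $K_0$ from $F(A(v^\pm))=0$), and your sketch of how to recover $V$ from $\varphi$ is in fact slightly more explicit than what the paper records. Two minor slips, neither of which affects the argument: $A$ is a homeomorphism of $(0,\infty)$ (not of $(0,1)$) onto $(0,1)$, and the factor $P(t)$ in $\partial_x V$ is not arbitrary but is fixed (up to a multiplicative constant) by requiring $V$ itself to satisfy the reduced HJB equation.
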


\begin{proposition}
For HJB equations (\ref{HJB}) and (\ref{HJBX}) there is no traveling wave profile with negative wave speed.
\end{proposition}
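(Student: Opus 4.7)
The plan is to derive the wave speed directly from the integrated traveling wave equation (\ref{travelwaveEQU}) and exploit the monotonicity of the flux functions $B$ and $\tilde B$. Recall that any bounded non-constant solution of the scalar autonomous ODE $z' = F(z)$ (with $z = A(v)$) on $\R$ is automatically monotone, and hence admits finite limits $z^\pm$ at $\pm\infty$; these limits must be zeros of $F$, for otherwise $z$ would eventually grow linearly in $\xi$ and violate boundedness. In particular the limits $v^\pm = A^{-1}(z^\pm)$ exist and $(A(v))'(\xi) = F(z(\xi)) \to 0$ as $\xi \to \pm\infty$.

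Passing to the limits $\xi \to \pm\infty$ in (\ref{travelwaveEQU}) then yields
\[
-c\, v^\pm + B(v^\pm) = K_0 ,
\]
and subtracting the two relations gives the wave-speed identity
\[
c\,(v^+ - v^-) = B(v^+) - B(v^-) .
\]
By the explicit formula (\ref{functionAB}), the function $B$ is non-decreasing on $[0,\infty)$: it is strictly increasing on $[0,1]$ where $B(\varphi) = -\frac{\omega}{2}(1-\varphi)^2$, and identically zero on $[1,\infty)$. Consequently the numerator and the factor $v^+ - v^-$ share the same sign, forcing $c \ge 0$ and contradicting the hypothesis $c<0$. This proves the claim for (\ref{HJB}).

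For the HJB equation (\ref{HJBX}) I would repeat the argument verbatim with $B$ replaced by $\tilde B = B + A$. Since $A$ is itself strictly increasing on $[0,\infty)$ by (\ref{functionAB}), the function $\tilde B$ is strictly increasing on $[0,\infty)$, and the analogous identity $\tilde c\,(v^+ - v^-) = \tilde B(v^+) - \tilde B(v^-)$ forces in fact $\tilde c > 0$ whenever $v^+ \neq v^-$. The only nontrivial technical step in the whole program is the justification that any bounded non-constant traveling wave profile has well-defined limits $v^\pm$ at $\pm\infty$ with $(A(v))'\to 0$ there; once this is in place, the remainder of the argument is purely algebraic and rests entirely on the monotonicity properties of $B$ (respectively $\tilde B$) already encoded in (\ref{functionAB}).
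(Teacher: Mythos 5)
Your proof is correct and follows essentially the same route as the paper: both rest on the secant-slope identity $c\,(v^+-v^-)=B(v^+)-B(v^-)$ (resp.\ with $\tilde B=B+A$) obtained by evaluating the integrated traveling wave equation at the two limiting states, together with the monotonicity of $B$ and $\tilde B$. Your justification that the limits $v^\pm$ exist and that $\frac{d}{d\xi}A(v)\to 0$ via the scalar autonomous ODE $z'=F(z)$ is a welcome explicit filling-in of a step the paper leaves implicit.
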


\begin{proof}
Indeed, for the traveling wave speed we have  $c = (B(v^-) - B(v^+))/(v^- - v^+) >0$ because the function $B$ is increasing. Analogously, $\tilde c>0$ as $\tilde B = B + A$ is an increasing function. Moreover, the profile $v(\xi)$ is always a decreasing function, as there are no roots $0<z^- <\frac12 < z^+ <1$ of $F$ such that $F^\prime(z^-)>0$ and $F^\prime(z^+)<0$. The latter follows from the inequality $F(0) =K_0 +\frac{\omega}{2} > K_0 + c = F(\frac12)$ and $F(1^-) = -\infty$ in the case $c<0$. The same statement holds true for the profile $\tilde v$. 
\end{proof}

\begin{figure}
\begin{center}
\includegraphics[width=0.35\textwidth]{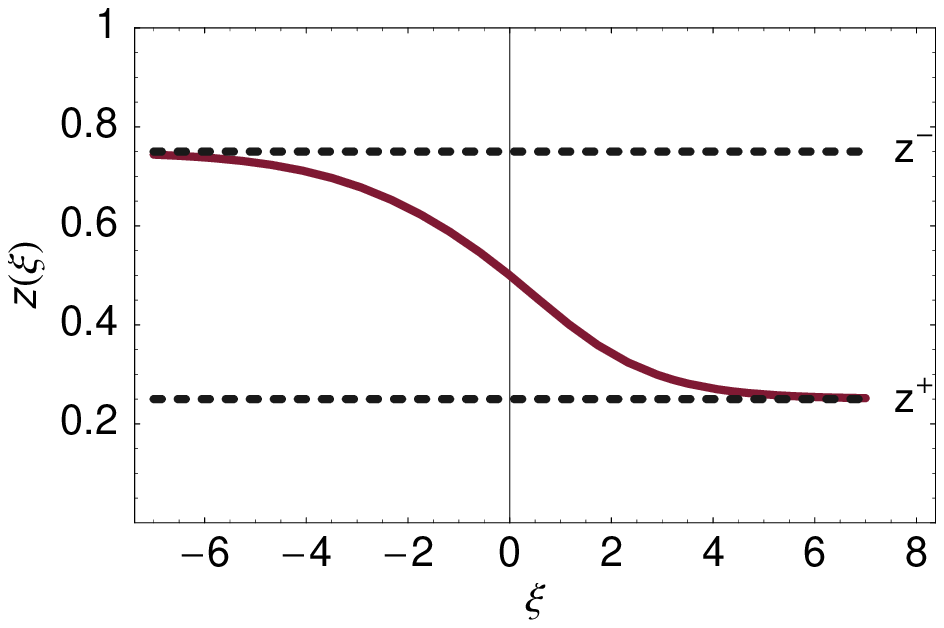}
\includegraphics[width=0.35\textwidth]{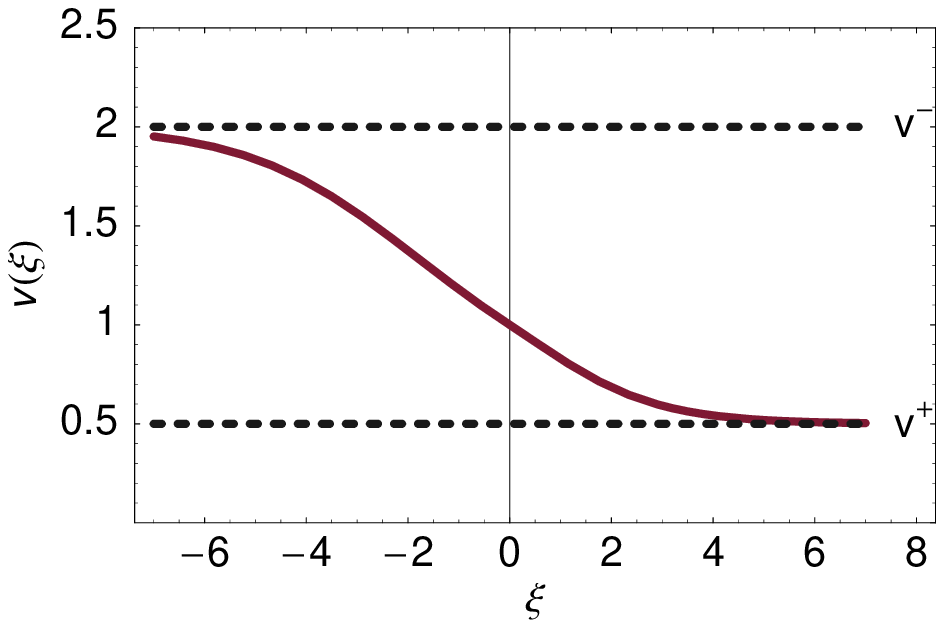}
\end{center}
\caption{
A graph of the function $z$ (left) and the traveling wave profile $v$ (right) for the parameter values $\omega=1, v^+=0.5, v^-=2$.}
\label{fig:travelwave} 
\end{figure}

\begin{figure}
\begin{center}
\includegraphics[width=0.45\textwidth]{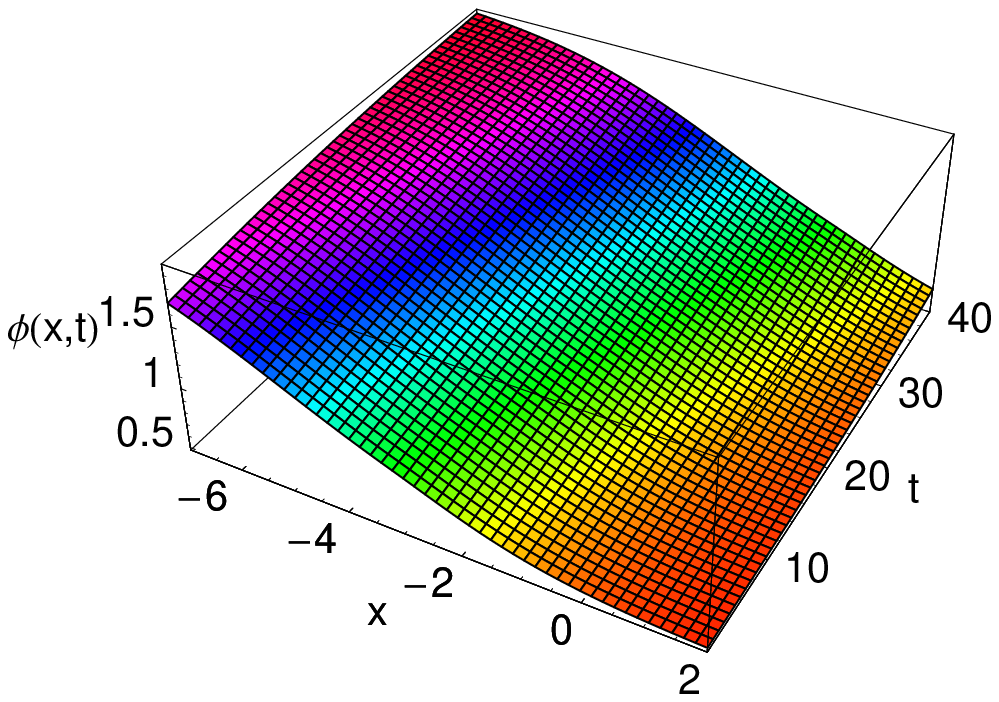}
\includegraphics[width=0.45\textwidth]{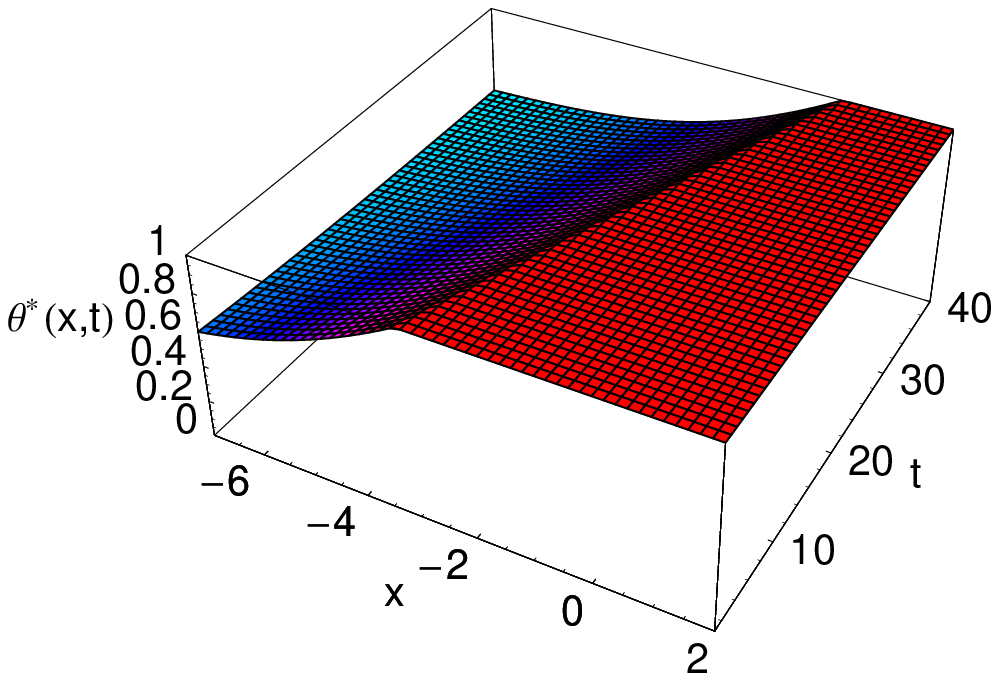}

\end{center}
\caption{
Graphs of the function $\phi(x,t)$ (left) and the response function $\theta^*(x,t)$ (right).}
\label{fig:phitheta} 
\end{figure}

\section{A HJB equation with more general drift and volatility functions}

One disadvantage of the HJB equations with a bounded constraint discussed in 
 sections \ref{sec:simpleHJB} and \ref{sec:portfolioHJB}  
is the fact that there is no traveling wave solution with 
increasing wave profile $v$. As a consequence, the optimal response function 
$\theta^*(x,t)$ is nondecreasing with respect to $x$. In this section we analyze a HJB equation with more general drift and volatility functions. Under suitable assumptions made on the model parameters,  we shall prove that there exists a traveling wave solution having an increasing wave profile $v$ and negative wave speed $c<0$.

We shall assume that a stochastic process  $X_t^\theta$ follows a Brownian motion 
\[
dX_t^\theta = \mu^\theta dt + \sigma^\theta dW_t
\]
with drift $\mu^\theta$  and volatility  $\sigma^\theta$ given by
\[
\mu^\theta = \beta+\omega \theta, \qquad (\sigma^\theta)^2 = 2(\alpha^2+\frac{1}{m}|\theta|^m),
\] 
where $\alpha,\beta,\omega, m\in {\mathbb R}, m>1$, and $\omega>0$ are model parameters. We again restrict our response strategy $\theta$ by $1$ from above. The corresponding HJB equation (\ref{VALUE}) for the value function $V(x,t)$ reads as follows:

\begin{eqnarray}\label{generalHJB}
 && \frac{\partial V}{\partial t}(x,t) + \sup_{\theta \leq 1} 
 \Big\{ \big(\alpha^2+\frac{1}{m}|\theta|^m\big) \frac{\partial^2 V}{\partial x^2}(x,t) 
 +(\beta+\omega \theta)\frac{\partial V}{\partial x}(x,t) \Big\} = 0, \nonumber
\\ 
&& V(x,T)=u(x). 
\end{eqnarray}
The simplified model discussed in the previous section corresponds to the choice 
$\alpha=\beta=0$, and $m=2$. 

Again, supposing $\partial_x V>0 $ and $\partial^2_x V <0$ the unconstrained  optimal response 
strategy $\bar\theta$ at $(x,t)$ is the unique argument of the maximum of the function
\[
\theta \mapsto 
\big(\alpha^2+\frac{1}{m}|\theta|^m\big) \partial^2_x V  +(\beta+\omega \theta)\partial_x V.
\]
Therefore
\[
\bar\theta = \varphi^{-\frac{1}{m-1}},
\]
where we have again employed  the new variable $\varphi$ defined by means of the Riccati-like transformation:
\[
\varphi(x,t):= -\frac{1}{\omega} \frac{\partial^2_x V(x,t)}{\partial_x V(x,t)}.
\]
Now, if $\varphi(x,t) >1$ at $(x,t)$ then $\bar\theta(x,t)<1$ and therefore, for the optimal response $\theta^*(x,t)$, we have $\theta^*(x,t) = \bar\theta(x,t)$. Hence the value function $V(x,t)$ satisfies 
\[
\frac{\partial V}{\partial t} + (\alpha^2 + \frac{1}{m} \varphi^{-\frac{m}{m-1}}) 
\frac{\partial^2 V}{\partial x^2} + (\beta+\omega \varphi^{-\frac{1}{m-1}}) 
\frac{\partial V}{\partial x} =0.
\]
Notice the following recurrent relations:
\begin{equation}
\partial^2_x V = -\omega \varphi  \partial_x V, \qquad 
\partial^3_x V = (\omega^2 \varphi^2 - \omega \partial_x\varphi)  \partial_x V.
\label{recurrentV}
\end{equation}
Using the relation for $\partial^2_x V$ we can rewrite the equation for $V$ in the form:
\begin{equation}
\frac{\partial V}{\partial t}  + g \frac{\partial V}{\partial x} =0,
\quad\hbox{where}\ \ 
g = \beta -\omega \alpha^2 \varphi +\omega \frac{m-1}{m}  \varphi^{-\frac{1}{m-1}}.
\label{Vgequ}
\end{equation}
Since 
\begin{equation}
\frac{\partial \varphi}{\partial t}  = -\frac{1}{\omega} \frac{\partial^3_{xxt} V}{\partial_x V}
+ \frac{1}{\omega} \frac{\partial^2_{xx} V \partial^2_{xt} V }{(\partial_x V)^2}
=
\frac{\partial \varphi}{\partial t}  = -\frac{1}{\omega} \frac{\partial^3_{xxt} V}{\partial_x V}
- \varphi \frac{\partial^2_{xt} V }{\partial_x V}
\label{phiequ}
\end{equation}
we obtain from (\ref{Vgequ}) and (\ref{recurrentV}) the equation 
\begin{equation}
\frac{\partial \varphi}{\partial t}  = \frac{1}{\omega} \frac{\partial^2 g}{\partial x^2}
-
\frac{\partial}{\partial x}(g \varphi),
\label{gequ}
\end{equation}
which can be further rewritten in terms of the function $\varphi$ as follows:
\begin{equation}\label{generalEQ1}
\frac{\partial\varphi}{\partial t} + \frac{\partial^2}{\partial x^2}\left( 
\alpha^2 \varphi  - \frac{m-1}{m} \varphi^{-\frac{1}{m-1}}
\right)
+ \frac{\partial}{\partial x}
\Big(\beta \varphi - \omega \alpha^2 \varphi^2  + 
\frac{m-1}{m}\omega \varphi^{\frac{m-2}{m-1}}\Big) = 0,
\end{equation}
provided that $\varphi(x,t) >1$. 

On the other hand, if $\varphi(x,t) <1$ at $(x,t)$ then $\bar\theta(x,t)>1$ and 
therefore for the optimal response $\theta^*(x,t)$ which is restricted by $1$ from above we have $\theta^*(x,t) = 1$. Then the value function $V(x,t)$ satisfies the following equation:
\[
\frac{\partial V}{\partial t} + (\alpha^2 + \frac{1}{m}) 
\frac{\partial^2 V}{\partial x^2} + (\beta+\omega) \frac{\partial V}{\partial x} =0.
\]
In view of relations (\ref{recurrentV}) we have
\[
\frac{\partial V}{\partial t}  + \tilde g \frac{\partial V}{\partial x} =0,
\quad\hbox{where}\ \ 
\tilde g = \beta + \omega -\omega( \alpha^2 + \frac{1}{m})  \varphi.
\]
Using (\ref{phiequ}), (\ref{recurrentV}) and (\ref{gequ}) (with $g$ replaced by 
$\tilde g$) we finally obtain the following reaction diffusion equation
\begin{equation}\label{generalEQ2}
\frac{\partial\varphi}{\partial t} 
+ (\alpha^2 + \frac{1}{m})\frac{\partial^2\varphi}{\partial x^2}
+ \frac{\partial}{\partial x}\Big((\beta+\omega)\varphi 
- \omega (\alpha^2 + \frac{1}{m}) \varphi^2  \Big) = 0,
\end{equation}
which is fulfilled by $\varphi$ in the case when $\varphi(x,t) <1$.

Similar to the simplified problem ($\alpha=\beta=0$, and $m=2$) in both cases 
$\varphi>1$ and $\varphi<1$ we obtain that the function $\varphi$ is a solution 
to (\ref{EQU}), that is:
\begin{equation}
\frac{\partial\varphi}{\partial t} + 
\frac{\partial^2}{\partial x^2} A(\varphi) 
+
\frac{\partial}{\partial x} B(\varphi)  =0,
\label{generalEQU}
\end{equation}
where  
\begin{equation}
A(\varphi)=\left\{
\begin{array}{cc}
(\alpha^2 +\frac{1}{m})\varphi, &  \quad \hbox{for}\ \ \varphi \le 1, 
\\ 
\\
1-\frac{m-1}{m}\varphi^{-\frac{1}{m-1}} + \alpha^2 \varphi, & \quad \hbox{for}\ \ \varphi >1,
\end{array} 
\right.
\label{functionAgen}
\end{equation}
\begin{equation}
B(\varphi)=\left\{
\begin{array}{cc}
(\beta+\omega)\varphi - \omega (\alpha^2 + \frac{1}{m}) \varphi^2 
- \omega\frac{m-1}{m}, &  \quad \hbox{for}\ \ \varphi \le 1, 
\\ 
\\
\beta \varphi - \omega \alpha^2 \varphi^2  
+ \frac{m-1}{m}\omega (\varphi^{\frac{m-2}{m-1}} - 1), & \quad \hbox{for}\ \ \varphi >1.
\end{array} 
\right.
\label{functionBgen}
\end{equation}
Notice that we have added appropriate constants to the definition of $B$ in order to make it continuous across the value $\varphi=1$. Both functions $A$ and $B$ are 
$C^1$ continuous functions for $\varphi>0$. Moreover, the function $A$ is strictly increasing.

\subsection{A traveling wave solution with negative wave speed}
\label{sec:riskaverse}

The aim of this subsection is to show, for suitable model parameters 
$\alpha,\beta,\omega$ and $m$, that there exists a traveling wave solution $\varphi$ to (\ref{EQU}) with increasing wave profile $v$. To this end, we search for a traveling wave solution to (\ref{EQU}) of the form 
\[
\varphi(x,t) = v(x+c(T-t)), \qquad x\in\R,\  t\in [0,T],
\]
with negative wave speed $c < 0$ and where $v=v(\xi)$ is a $C^1$ function defined 
on $\R$. Inserting the traveling wave form of a solution $\varphi$ into (\ref{EQU}) 
we obtain that the function $v=v(\xi)$ should fulfill identity (\ref{travelwaveEQU}). Again, we introduce an auxiliary 
function $z(\xi) = A(v(\xi))$. With this substitution, $\varphi(x,t) = v( x + c (T-t))$ 
is a traveling wave solution to (\ref{EQU}) if and only if the function $z$ is a solution 
to the ODE:
\begin{equation}
z^\prime(\xi) = F (z(\xi)), \qquad \hbox{where}\ \ F(z) =  G(A^{-1}(z)), 
\quad G(v) = K_0 + c v  - B(v).
\label{generalZODE}
\end{equation}
Notice that the function $G$ is $C^1$ continuous for $v>0$. The function $A(v)$ is 
increasing for $v>0$ and its range is the interval $(0,1)$ if $\alpha=0$, and 
$(0,\infty)$ if $\alpha\not=0$. In order to construct an increasing 
traveling wave profile $v$ such that $\lim_{t\to\pm\infty}v(\xi) = v^\pm$ where 
$0<v^-< 1 < v^+$, we first have to find roots $z^\pm= A(v^\pm)$ of the function 
$F$ such that $F(z)>0$ for $z^-<z<z^+$ and $F^\prime(z^-)>0, F^\prime(z^+)<0$. 
To this end, it is sufficient to investigate the roots and behavior of the function 
$G(v)$ for positive $v>0$. Clearly, if $m\ge 2$ then the function $B$ is 
concave, $B''(v) \le 0, v\not=1$. Thus $G$ is convex and there are no roots $v^\pm$ 
such that $G(v) >0$ for $v^-<v<v^+$. For this reason we cannot find the desired form of a 
traveling wave solution $\varphi$ for the case $m\ge 2$. 

\begin{figure}
\begin{center}
\includegraphics[width=0.35\textwidth]{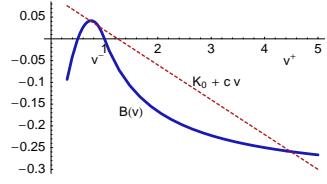}
\end{center}
\caption{
A graph of the functions $B(v)$ and $K_0 + c v$ for model parameters: 
$m=3/2, \alpha=\beta=0, c=-K_0=-0.1, \omega=1$.}
\label{fig:Broots} 
\end{figure}

The function $v\mapsto B(v)$ is strictly concave for $v\le1$. It has a local maximum at $\underline{v}^- = \frac{\beta+\omega}{2\omega(\alpha^2+1/m)}$ and it is strictly convex for $v>1$ provided that $\omega >0, \alpha=0$ and $1<m<2$. Notice that, for $\alpha=\beta=0, 1<m<2$ the maximum of $B$ is attained at $\underline{v}^- = m/2$. In what follows, we restrict ourselves to the case $\alpha=\beta=0$, and $m\in(1,2)$. Inspecting the behavior of the function $B$ (see also Fig.~\ref{fig:Broots}) we conclude the following result. 

\begin{theorem}\label{risk-averse}
Assume that $\alpha=\beta=0, m\in(1,2)$, and $\omega>0$. Let the limits $v^\pm$ be such that $\underline{v}^-  < v^-< 1$ and $v^+ > \underline{v}^+ >1$ where $\underline{v}^- = m/2<1$ and $\underline{v}^+>1$ is the unique root of the secant equation $B(\underline{v}^+) - B(v^-) = B^\prime(v^-) (\underline{v}^+ - v^-)$.

\begin{enumerate}
\item there exists a traveling wave speed $c<0$ given by 
\[
c= \frac{B(v^+) - B(v^-)}{v^+ - v^-}
\]
and intercept $K_0=B(v^-) - c v^-$ such that 
$G(v^\pm)=0$ and $G^\prime(v^-)>0$ and $G^\prime(v^+)<0$;

\item there exists a solution $V(x,t)$ to HJB (\ref{generalHJB}) with the optimal 
response function $\theta^*(x,t)$ given by $\theta^* = \min(1, \varphi^{-\frac{1}{m-1}} )$  where  $\varphi(x,t) = v(x+c(T-t))$ and $v(\xi)$ is a $C^1$ smooth and strictly increasing function, $\lim_{\xi\to\pm \infty} v(\xi) = v^\pm$.

\end{enumerate}

\end{theorem}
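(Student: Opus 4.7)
The plan is to reduce the theorem to a shape analysis of the scalar function $G(v) = K_0 + c v - B(v)$, show that the assumption $v^+ > \underline{v}^+$ forces $G$ to have precisely two roots $v^\pm$ with the desired sign pattern of $G'$, and then copy the ODE construction of Section~\ref{sec:riskseeking} with the direction of monotonicity reversed.

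First I would analyze the piecewise function $B$ from (\ref{functionBgen}) under the assumption $\alpha=\beta=0$, $m\in(1,2)$, $\omega>0$. Direct differentiation shows that $B$ is strictly concave on $(0,1)$ with maximum at $\underline{v}^-=m/2$ and $B(1)=0$, while on $(1,\infty)$ it is strictly convex and strictly decreasing with a finite limit $-\omega(m-1)/m$ at infinity, and $B\in C^1$ at $v=1$. The tangent line $L(v):=B(v^-)+B'(v^-)(v-v^-)$ at any $v^-\in(\underline{v}^-,1)$ therefore has strictly negative slope, lies above the concave portion of $B$, satisfies $L(1)>0$, and is eventually driven to $-\infty$ while $B$ stays bounded. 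Consequently the difference $h:=B-L$ is negative at $v=1$, convex on $(1,\infty)$, and tends to $+\infty$, which produces a unique crossing $\underline{v}^+>1$ of $h$ and ensures $h>0$ on $(\underline{v}^+,\infty)$.

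For part~(1) I set $c=(B(v^+)-B(v^-))/(v^+-v^-)$ and $K_0=B(v^-)-cv^-$, so that $G(v^\pm)=0$ by construction. Since $v^-<1<v^+$ gives $B(v^-)>0>B(v^+)$, we immediately have $c<0$. The bound $v^+>\underline{v}^+$ is exactly $h(v^+)>0$, i.e.\ $c>B'(v^-)$, whence $G'(v^-)=c-B'(v^-)>0$. Finally $G''=-B''$ is positive on $(0,1)$ and negative on $(1,\infty)$, so $G$ is convex then concave; convexity together with $G(v^-)=0$ and $G'(v^-)>0$ forces $G>0$ on $(v^-,1]$ and in particular $G(1)>0$, while on $(1,\infty)$ concavity together with $G(1)>0$ and $G\to-\infty$ (as $c<0$ and $B$ is bounded) forces a unique root with strictly negative slope. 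This root must coincide with the prescribed $v^+$, giving $G'(v^+)<0$.

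Part~(2) then repeats the arguments of Section~\ref{sec:riskseeking}. Since $F(z)=G(A^{-1}(z))$ inherits roots at $z^\pm=A(v^\pm)$ with $F'(z^-)>0$, $F'(z^+)<0$ and $F>0$ on $(z^-,z^+)$, standard phase-plane analysis produces, uniquely up to translation, a $C^2$ heteroclinic $z(\xi)$ with $z'>0$ and limits $z^\pm$ at $\pm\infty$; setting $v(\xi)=A^{-1}(z(\xi))$ yields a strictly increasing $C^1$ profile and a weak solution $\varphi(x,t)=v(x+c(T-t))$ of (\ref{generalEQU}). The value function $V$ is recovered from $\varphi$ by integrating the defining relation (\ref{TR}) twice in $x$ and fixing the $t$-dependence of the two integration constants via the transport equation (\ref{Vgequ}); the formula $\theta^*=\min(1,\varphi^{-1/(m-1)})$ is then the piecewise maximizer of the HJB Hamiltonian in (\ref{generalHJB}) on $\{\varphi\le1\}$ and $\{\varphi>1\}$. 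The main obstacle is Part~(1), specifically the convex/concave comparison that turns the single tangent-crossing condition $v^+>\underline{v}^+$ into the simultaneous sign conditions $G'(v^-)>0$ and $G'(v^+)<0$; the restriction $v^->m/2$ is essential to ensure $B'(v^-)<0$ and hence that $L$ descends to $-\infty$, which is what gives a unique $\underline{v}^+$ in the first place. Once Part~(1) is in place, the ODE step and the reconstruction of $V$ are routine modifications of the simpler case in Section~\ref{sec:riskseeking}.
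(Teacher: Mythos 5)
Your proposal is correct and follows essentially the same route as the paper: a shape analysis of the piecewise concave/convex function $B$ (for $\alpha=\beta=0$, $m\in(1,2)$) to turn the secant condition $v^+>\underline{v}^+$ into the sign conditions $G'(v^-)>0$, $G'(v^+)<0$, followed by the scalar phase-line construction of the heteroclinic for $z'=F(z)$ and the identification $v=A^{-1}(z)$. You simply supply more detail than the paper does — the explicit tangent-line comparison in Part~(1) and the reconstruction of $V$ from $\varphi$ — but the underlying argument is the same.
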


\begin{proof}
Part 1) follows from the behavior of the function $B$ given by (\ref{functionBgen})  for parameter values $\alpha=\beta=0, m\in(1,2)$ and $\omega>0$. Notice that the condition $v^+ > \underline{v}^+$, where $\underline{v}^+$ is the unique root of the secant equation $B(\underline{v}^+) - B(v^-) = B^\prime(v^-) (\underline{v}^+ - v^-)$, is necessary in order to find two roots of the equation $K_0 + c v = B(v)$ where $c > B^\prime(v^-)$ with the property $G(v^-) \equiv c - B^\prime(v^-) >0$ and $G(v^+) \equiv c - B^\prime(v^+) <0$. 

Part 2). Since the function $A$ is strictly increasing and $G^\prime(v^-)>0$ and 
$G^\prime(v^+)<0$ the ODE (\ref{generalZODE}) has a solution $z=z(\xi)$ such  that $F(z^\pm)=0$ and $F(z)>0$ for $z^-<z<z^+$ and $F^\prime(z^-)>0, F^\prime(z^+)<0$  where  $z^\pm= A(v^\pm)$. 

Then there exists a traveling wave solution $\varphi$ to (\ref{EQU}) of the form 
$\varphi(x,t) = v(x+c(T-t))$ with negative wave speed $c < 0$ where 
$v(\xi)= A^{-1}(z(\xi))$. Moreover, $\lim_{\xi\to\pm \infty} v(\xi) = v^\pm = A^{-1}(z^\pm)$. 
If $\varphi(x,t)>1$ then the optimal response $\theta^*(x,t)$ is given by 
$\theta^*(x,t) = \bar\theta(x,t) = \varphi(x,t)^{-\frac{1}{m-1}}$. On the other 
hand, $\theta^*(x,t)=1$ provided that $\varphi(x,t)\le1$, which concludes the proof. 
\end{proof}

By a non-monotone traveling wave solution we mean a non-constant solution whose derivative changes the sign several times.

\begin{proposition}
There are no non-monotone traveling wave solutions to the fully nonlinear equations (\ref{EQU}) and (\ref{generalEQU}) analyzed in sections \ref{sec:riskseeking} and \ref{sec:riskaverse}, respectively. 
\end{proposition}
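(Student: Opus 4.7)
The plan is to exploit the reduction, already carried out in sections~\ref{sec:riskseeking} and~\ref{sec:riskaverse}, of the traveling wave problem for~(\ref{EQU}) or~(\ref{generalEQU}) to a scalar autonomous first-order ODE. Inserting $\varphi(x,t) = v(x + c(T-t))$ and integrating once in $\xi$ yields~(\ref{travelwaveEQU}), and the substitution $z(\xi) = A(v(\xi))$ transforms this into $z'(\xi) = F(z(\xi))$ on $\R$, where $F(z) = K_0 + c A^{-1}(z) - B(A^{-1}(z))$. Since $A$ is strictly increasing on $(0,\infty)$ in both settings, $v$ and $z$ have identical monotonicity and the zeros of $v'$ coincide with the zeros of $z'$. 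The proposition therefore reduces to the assertion that every non-constant solution of the autonomous scalar ODE $z' = F(z)$ is strictly monotone on $\R$.

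The decisive technical input will be local Lipschitz regularity of $F$ on the admissible $z$-range. Both $A$ and $B$ are piecewise $C^1$ with matching values and derivatives across the singular value $\varphi = 1$: in the first setting this is apparent from~(\ref{functionAB}), and in the second it is a direct consequence of~(\ref{functionAgen})--(\ref{functionBgen}). Combined with the strict monotonicity and $C^1$ regularity of $A$, this makes $F = K_0 + c A^{-1} - B\circ A^{-1}$ itself $C^1$, and hence locally Lipschitz, on the range $A((0,\infty))$. I view this matching check as the only mild obstacle; once it is in place, the conclusion follows from standard uniqueness theory.

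With that preparation, the argument I would run is by contradiction. Assume that $v$ is a non-constant $C^1$ profile with $v'(\xi_0) = 0$ at some $\xi_0 \in \R$ (which must occur if the derivative changes sign even once). Then $z'(\xi_0) = A'(v(\xi_0)) v'(\xi_0) = 0$, so $F(z(\xi_0)) = 0$, and $z(\xi_0)$ is an equilibrium of the autonomous ODE. Both the trajectory $z$ and the constant function $\xi \mapsto z(\xi_0)$ then satisfy the same initial value problem at $\xi_0$, so Picard--Lindel\"of uniqueness forces $z \equiv z(\xi_0)$ on all of $\R$. Injectivity of $A$ yields $v \equiv v(\xi_0)$, contradicting non-constancy. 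Hence $v'$ vanishes nowhere and, by continuity, is of constant sign, so $v$ is strictly monotone, which proves the proposition for both~(\ref{EQU}) and~(\ref{generalEQU}) simultaneously.
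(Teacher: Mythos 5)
Your argument is correct and is essentially the paper's own proof: assume $v'(\xi_0)=0$, pass to $z'(\xi_0)=0$, and invoke uniqueness for the autonomous ODE $z'=F(z)$ to force $z\equiv\mathrm{const}$, contradicting non-constancy of the profile. The only difference is that you make explicit the $C^1$ matching of $A$ and $B$ across $\varphi=1$ needed for the Lipschitz/Picard--Lindel\"of step, which the paper leaves implicit.
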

\begin{proof}
We follow the ideas from \cite{IM}. If we assume to the contrary $z^\prime(\xi_0) =0$ then, as a consequence of the uniqueness of solutions to ODEs (\ref{generalZODE}) and (\ref{ode-riskseeking}), we obtain $z(\xi) \equiv const$ for all $\xi\in\R$. Consequently, the profile $v$ is constant, a contradiction. 
\end{proof}

We finish this section with two computational examples. In Fig.~\ref{fig:functionG} 
we plot a graph of the function $G(v)$ for the model parameters: 
$m=3/2, \alpha=\beta=0, c=-K_0=-0.1$ and $\omega=1$. In this case we 
have $v^-=1$ is a root of $G$. Moreover, $B^\prime(1) = \omega \frac{m-2}{m}$ 
and so $G^\prime(1) = c - \omega \frac{m-2}{m}$. Hence $G^\prime(v^-) >0$ if and only 
if $c > - \omega \frac {2- m}{m}$. The function $G$ also has a root $v^+=10/3 >1$. Since $m=3/2$ and $K_0+ c =0$ 
we have the analytic expression for $v^+=\frac{\omega}{-3c} >1$. Moreover, $G^\prime(v^+) = c  + 3 c^2<0$ for $\omega=1, c=-K_0=-0.1$, and $m=3/2$.
In Fig.~\ref{fig:travelwaveAdverse}  we plot a graph of the function $z$  and the traveling wave profile $v$. The functions $\phi(x,t)$ and the optimal response function  $\theta^*(x,t)$ are depicted in Fig.~\ref{fig:phithetaAdverse}.

\begin{figure}
\begin{center}
\includegraphics[width=0.35\textwidth]{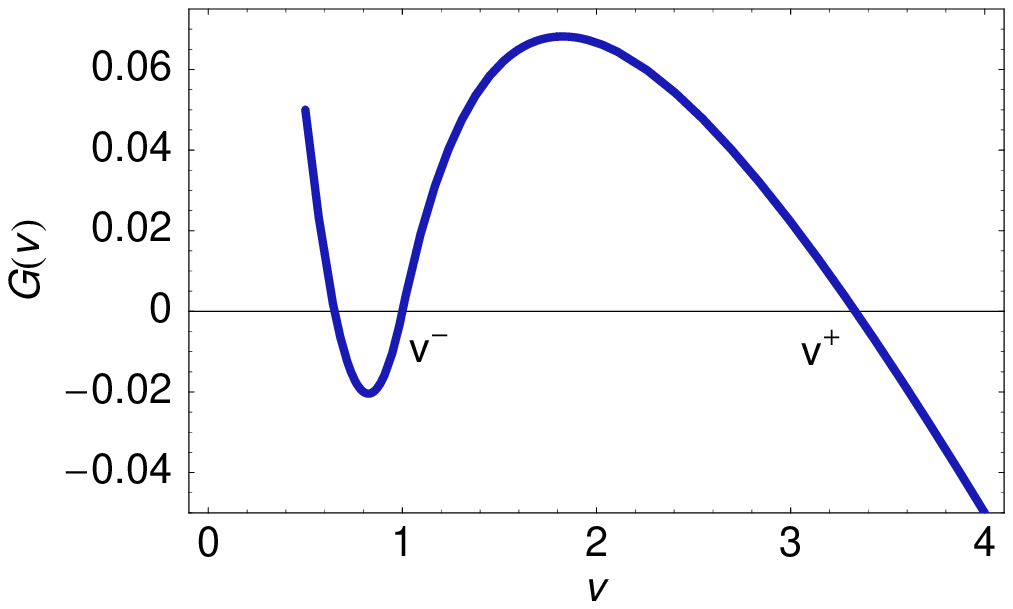}
\includegraphics[width=0.35\textwidth]{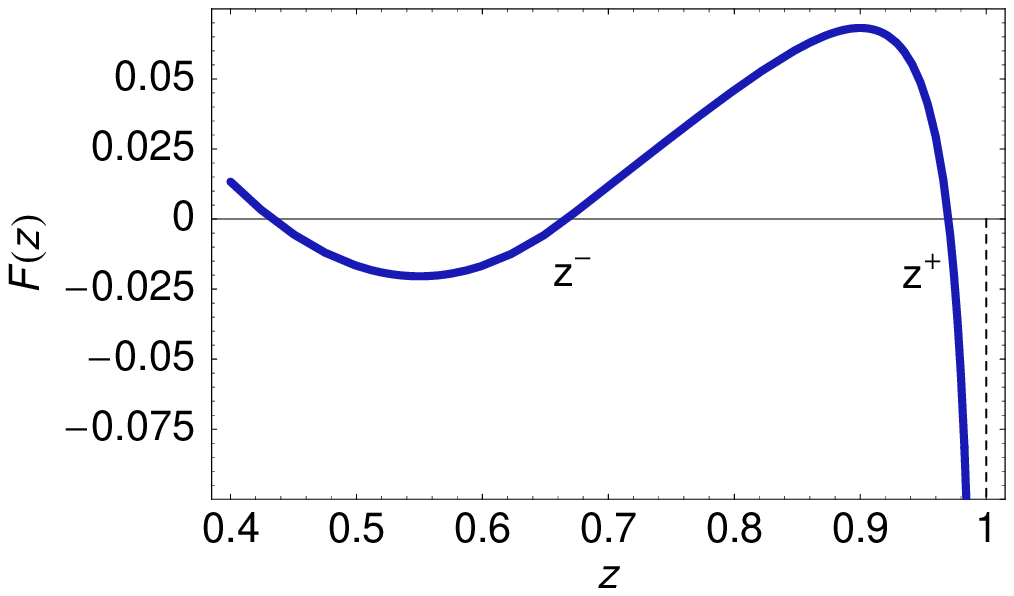}
\end{center}
\caption{
A graph of the function $G(v)$ for model parameters: 
$m=3/2, \alpha=\beta=0, c=-K_0=-0.1, \omega=1$ (left). In this case 
$v^-=1, v^+=10/3$. A graph of the function $F(z)=G(A^{-1}(z))$ (right) with 
roots $z^-=2/3, z^+ \approx 0.97$.}
\label{fig:functionG} 
\end{figure}

\begin{figure}
\begin{center}
\includegraphics[width=0.35\textwidth]{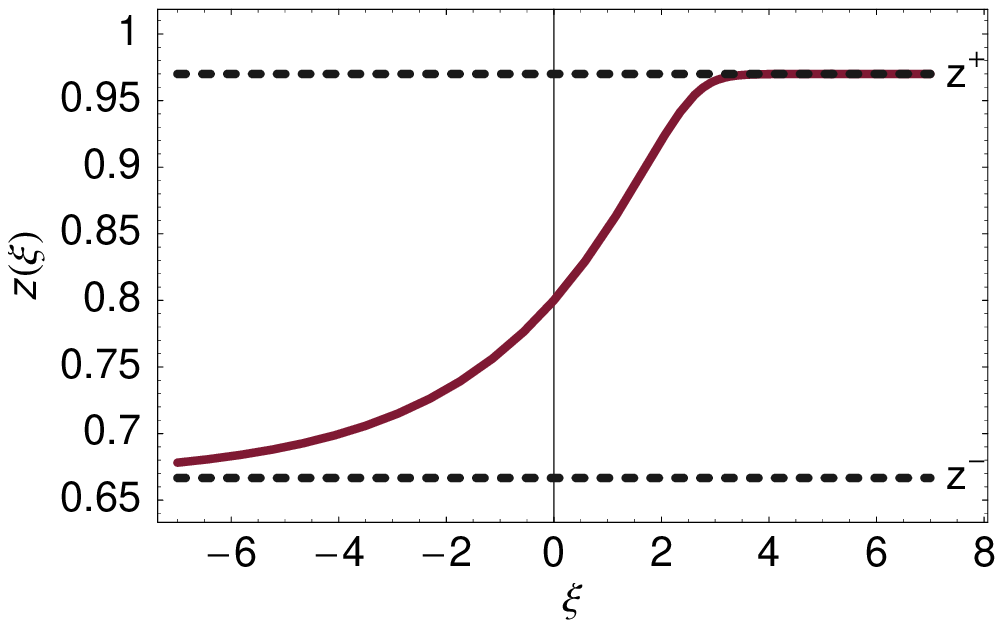}
\includegraphics[width=0.35\textwidth]{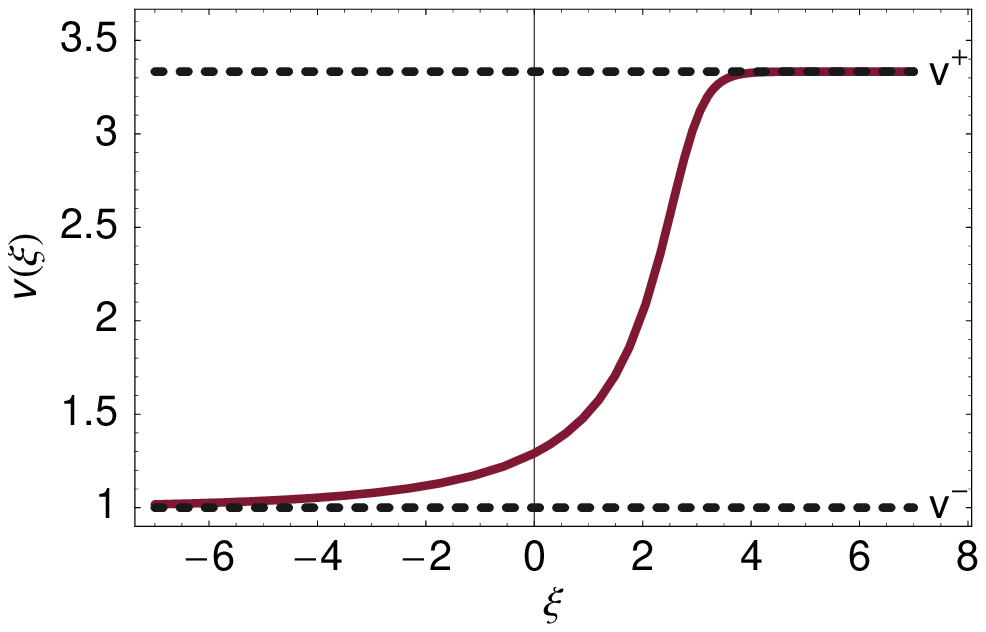}
\end{center}
\caption{
A graph of the function $z$ (left) and the traveling wave profile $v$ (right) 
for the parameter values 
$m=3/2, \alpha=\beta=0, c=-K_0=-0.1, \omega=1$.}
\label{fig:travelwaveAdverse} 
\end{figure}

\begin{figure}
\begin{center}
\includegraphics[width=0.45\textwidth]{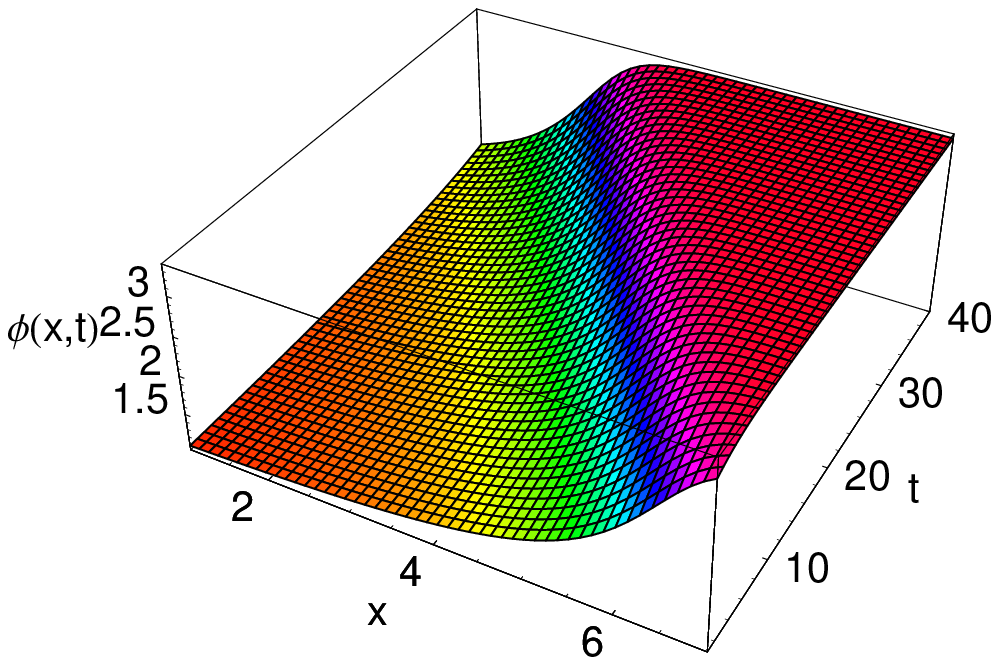}
\includegraphics[width=0.45\textwidth]{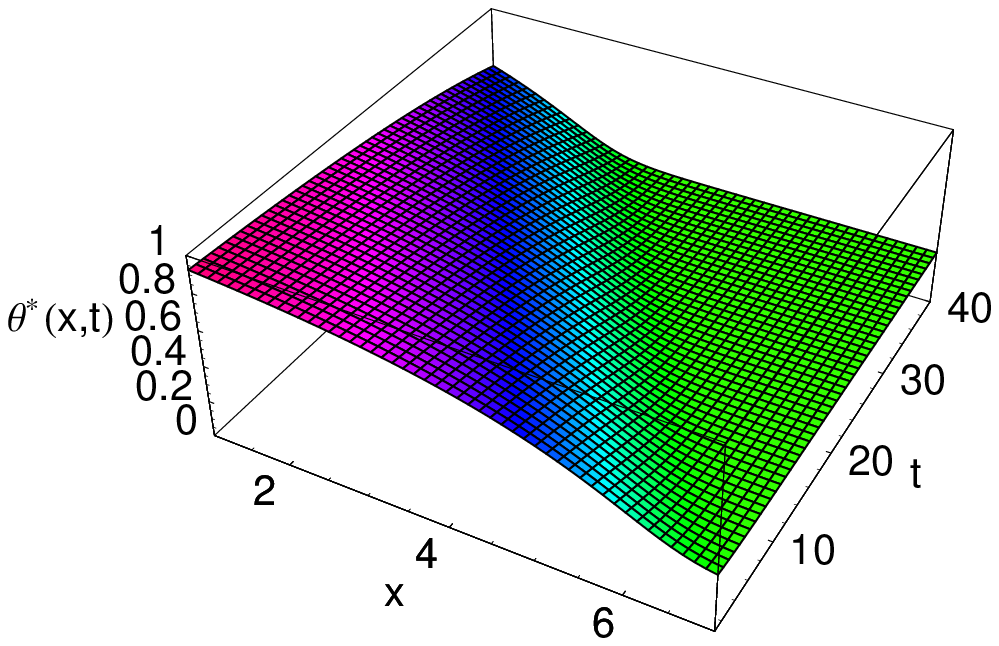}

\end{center}
\caption{
Graphs of $\phi(x,t)$ (left) and the optimal response function $\theta^*(x,t)$ (right).}
\label{fig:phithetaAdverse} 
\end{figure} 

For parameter values: $m=3/2, \alpha=\beta=0, c=-0.08, K_0=0.1$, and $\omega=1$ the function has three roots. An unstable root is located at $v^-\approx 0.888 <1$ and a stable root exists at $v^+\approx 4.488>1$. Therefore the traveling wave solution $\varphi(x,t)$ becomes strictly less than $1$ for $x\to -\infty$. Hence the optimal response function $\theta^*(x,t) = \min(1, \varphi^{-\frac{1}{m-1}}(x,t) )$ is identically 1 for all  sufficiently large negative $x$  (see Fig.~\ref{fig:phithetaAdverse2}).

\begin{figure}
\begin{center}
\includegraphics[width=0.45\textwidth]{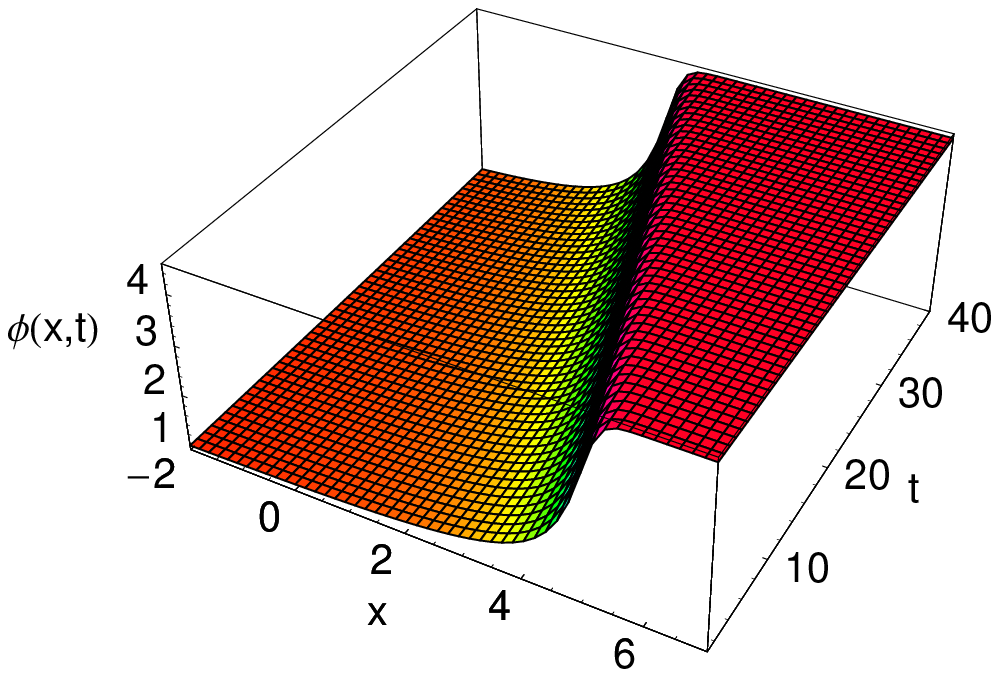}
\includegraphics[width=0.45\textwidth]{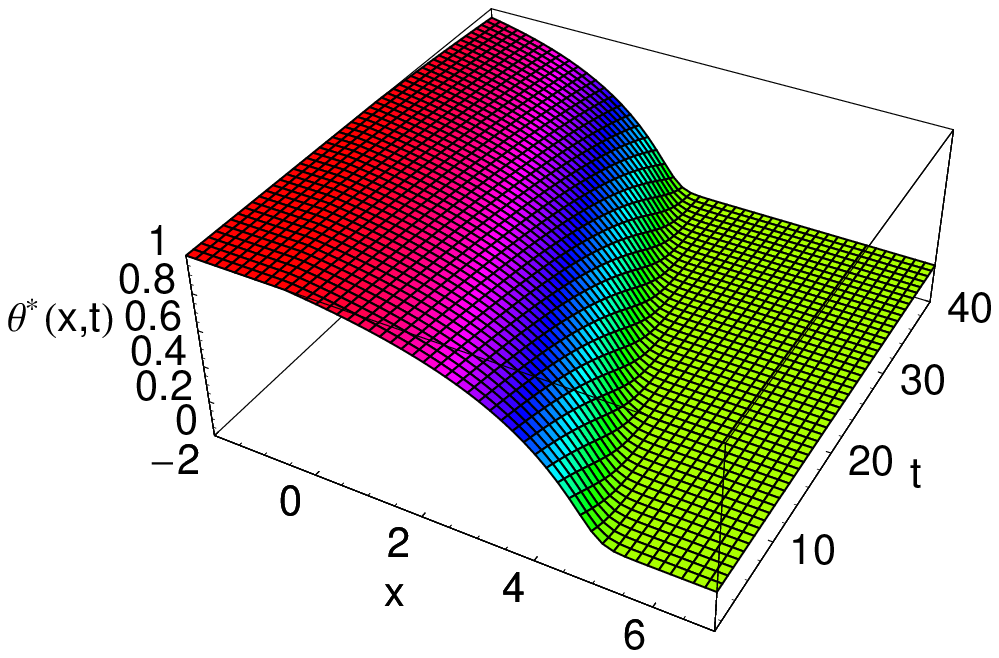}
\end{center}
\caption{
A graph of the optimal response function $\theta^*(x,t)$ attaining the prescribed 
boundary $\theta=1$ (left), a graph of the optimal value $\theta^*(x,t)$ computed by the stochastic dynamic optimization model \cite{KILIANOVA2006} (right). }
\label{fig:phithetaAdverse2} 
\end{figure} 

\medskip
Finally, we give a justification of the assumption made on the value function $V$. In what follows, we shall prove that $\partial^2_x V<0$ and $\partial_x V>0$. 

\begin{proposition}\label{prop:concavity}
Suppose that the terminal condition $V(x,T)\equiv u(x)$ is a smooth function and there exist constants $\lambda^\pm>0$ such that $\lambda^-< - u''(x)/u'(x) <\lambda^+$ for all $x\in \R$. Then $\lambda^-/\omega < \varphi(x,t) <\lambda^+/\omega$ for all $x\in \R$ and $t\in[0,T]$ where $\varphi$ is a solution to equation (\ref{HJB}) or (\ref{HJBX}) and satisfying the terminal condition $\varphi(x,T) = -(1/\omega) u''(x)/u'(x),\,  x\in\R$. 

Furthermore, for the value function $V$ it holds: $\partial^2_x V(x,t)<0$ and $\partial_x V(x,t)>0$ for all $x\in \R$ and $t\in[0,T]$
\end{proposition}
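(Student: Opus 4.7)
The plan is to reverse time via $\tau = T - t$ and treat the equation for $\varphi$ as a quasilinear parabolic Cauchy problem to which the standard parabolic comparison principle applies. Rewriting (\ref{EQU}) in divergence form after the time change yields
\[
\partial_\tau \varphi = \partial_x\bigl(A'(\varphi)\,\partial_x \varphi\bigr) + B'(\varphi)\,\partial_x \varphi, \qquad \tau \in (0,T],\ x \in \R,
\]
with initial datum $\varphi(x,0) = -(1/\omega)\,u''(x)/u'(x)$. Because $A$ is strictly increasing on $(0,\infty)$, the coefficient $A'(\varphi)>0$, so the equation is (uniformly) parabolic on any strip where $\varphi$ stays bounded away from zero and infinity, and $B'(\varphi)\partial_x \varphi$ is a lower-order drift. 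The same reformulation applies verbatim to (\ref{EQUX}), with the $A,B$ associated to that model.

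The key observation is that for any constant $\psi \equiv c \in (0,\infty)$ all spatial derivatives vanish, so $\psi$ is a stationary solution of the above equation. The hypothesis $\lambda^- < -u''(x)/u'(x) < \lambda^+$ translates to $\lambda^-/\omega < \varphi(x,0) < \lambda^+/\omega$ for the initial datum. Applying the comparison principle with the stationary barriers $\psi_\pm \equiv \lambda^\pm/\omega$ then yields the two-sided bound $\lambda^-/\omega < \varphi(x,\tau) < \lambda^+/\omega$ for all $\tau \in [0,T]$, $x \in \R$, as long as $\varphi$ stays smooth on that interval.

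To transfer this to signs of $\partial_x V$ and $\partial_x^2 V$, I would use a bootstrap/continuity argument. At $t = T$ the assumption on $u$ gives $\partial_x V(x,T) = u'(x) > 0$ and $\partial_x^2 V(x,T) = u''(x) < 0$. Let $t_* \in [0,T)$ be the infimum of times for which both signs persist throughout $(t_*, T] \times \R$. On that interval the Riccati variable $\varphi$ is well defined and satisfies (\ref{EQU}), hence by the preceding step it is pinched strictly between the positive constants $\lambda^\pm/\omega$. Boundedness of $\varphi$ from above excludes zeros of $\partial_x V$, while strict positivity of $\varphi$ (combined with $\omega>0$) forces $\partial_x^2 V$ to have the opposite sign to $\partial_x V$; matching with the initial signs at $t=T$, both sign conditions propagate past $t_*$ by continuity, contradicting minimality unless $t_* = 0$.

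The principal technical obstacle is the rigorous application of the comparison principle on the unbounded spatial domain $\R$ for a quasilinear equation whose diffusivity $A'(\varphi)$ is only piecewise smooth, with a corner at $\varphi = 1$ where $A$ is merely $C^1$. One would either regularize $A$ and $B$ near $\varphi = 1$ and pass to the limit, or invoke viscosity-solution machinery tailored to such degenerate-coefficient problems; some control of $\varphi$ as $|x|\to\infty$ is also needed in order to rule out pathological unbounded perturbations and to justify the use of the constant barriers against solutions on all of $\R$.
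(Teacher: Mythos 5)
Your proposal is correct and follows essentially the same route as the paper: both apply the parabolic maximum/comparison principle to the equation for $\varphi$, using the constant barriers $\lambda^\pm/\omega$ (stationary solutions, since the equation involves only spatial derivatives of $\varphi$) together with the uniform positivity of the diffusion coefficient $A'(\varphi)$, and then transfer the resulting bounds on $\varphi = -(1/\omega)\,\partial_x^2 V/\partial_x V$ back to the sign conditions $\partial_x V>0$, $\partial_x^2 V<0$. Your continuation argument for the second step and your remarks on the $C^1$ corner of $A$ at $\varphi=1$ are somewhat more explicit than the paper's treatment, but the underlying argument is the same.
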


\begin{proof}
Equations (\ref{HJB}) as well (\ref{HJBX}) are strictly parabolic partial differential equations with a strictly positive diffusion coefficient $A^\prime(\phi)$ which is uniformly bounded from below and above. Using the parabolic maximum principle we conclude $\lambda^-/\omega < \varphi(x,t) <\lambda^+/\omega$ for all $x\in \R$ and $t\in[0,T]$, provided that $\lambda^-/\omega < \varphi(x,T) \equiv - (1/\omega) u''(x)/u'(x)  <\lambda^+/\omega$ for all $x\in \R$. 

Concavity and the monotonicity of the value function $V$ can be now deduced from the fact $-\partial^2_x V(x,t)/\partial_x V(x,t) = \omega \phi(x,t) \in (\lambda^-, \lambda^+)$.
\end{proof}

\section{A motivation for analyzing HJB equations with range bounds}

The optimal allocation problem has a long history of research and much progress has been made so far (see for instance Dupa\v{c}ov\'a \cite{D}). For a comprehensive overview of the stochastic dynamic optimization  problems of the form (\ref{maxproblem}) with the prescribed terminal value we refer the reader to papers by Merton \cite{M}, Browne \cite{BROWNE1995}, Bodie \emph{et al.} \cite{BODIE2003} and to the wide range of literature referenced therein. 

As an example of a process $X_t^\theta$ of the form (\ref{genproc}) one can consider a stochastic process representing returns on the accumulated sum of a saver's portfolio consisting of volatile stocks and less volatile bonds. The time discrete version of the  stochastic dynamic accumulation model has been proposed and analyzed by Kilianov\'a \emph{et al.}  in \cite{KILIANOVA2006}. In the time continuous limit, according to the pension savings accumulation model analyzed by Macov\'a and \v{S}ev\v{c}ovi\v{c} in \cite{MS}, we have $X_t=\ln Y_t$ where the stochastic variable $Y_t$ represents the logarithm of a ratio $Y_t$ of the accumulated sum in the pension account and the yearly salary of a future pensioner. 
 More precisely, by choosing a portfolio consisting of $\theta\in[0,1]$ part of stocks and $1-\theta$ part of bonds one can derive the SDE for the ratio $Y_t$:
\begin{equation}
\d Y_t^\theta = (\varepsilon + \mu^\theta Y_t) \d t + \sigma^\theta Y_t \d W_t,
\label{processY}
\end{equation}
where $\mu^\theta = \theta \mu_{s} + (1-\theta) \mu_{b}$ and
$(\sigma^\theta)^2 = \theta^2 (\sigma_{s})^2 + (1-\theta)^2 (\sigma_{b})^2 + 
2\varrho \theta(1-\theta) \sigma_{s}\sigma_{b}$ (c.f. \cite{MS}). In this model, $\mu_{s}, \mu_{b}$ and $\sigma_{s}, \sigma_{b}$  denote the expected returns and volatilities on stocks and bonds, respectively. Here $\varrho\in (-1,1)$ is a correlation between returns on stocks and bonds. Using It\=o's lemma for the variable $X_t=\ln Y_t$,  we obtain the following SDE:
\begin{equation}
\d X_t^\theta = \left( \varepsilon e^{-X_t} + \mu^\theta -\frac12 (\sigma^\theta)^2 \right) \d t + \sigma^\theta \d W_t.
\label{processX}
\end{equation}
In a stylized financial market, one can assume the expected return $\mu_b$ on bonds to be very small when compared to $\mu_s$ and also $\sigma_b \ll \sigma_s$. Furthermore, we notice that the yearly contribution rate $\varepsilon>0$ can be considered as a  small model parameter (the value $\varepsilon=0.09$ was accepted in Slovakia, $\varepsilon=0.03$ was proposed in Czech republic, and, $\varepsilon=0.14$ was accepted in Bulgarian pension fund system). The asymptotic expansions of a solution to the corresponding HJB equation with respect to the small parameter $0<\varepsilon\ll 1$ have been analyzed in \cite{MS}. By taking the formal asymptotic limit $\varepsilon\to 0, \mu_b\to 0, \sigma_b\to 0$ and rescaling the time $t\mapsto t/\sigma_s^2$, we end up with the following reduced SDE governing the variable $X_t$:
\begin{equation}
\d X_t^\theta = (\omega \theta - \frac12 \theta^2 ) \d t + \theta \d W_t, 
\label{sdeXlimit}
\end{equation}
where $\omega=\mu_s/\sigma_s^2$ is the Sharpe (reward-to-variability) ratio of returns on bonds. Such an underlying stochastic process corresponds to the one analyzed in Theorem~\ref{theorem:riskseeking}. The wave speed of the traveling wave solution from Theorem~\ref{theorem:riskseeking} is positive,  $c>0$ and the profile $v$ is a decreasing function, $v^\prime<0$. As a consequence, we have 
$\partial_t\varphi >0$ and $\partial_x\varphi < 0$. Hence, for the optimal response function $\theta^* = \min(1, 1/\varphi)$, we obtain the following properties:
\[
\partial_t \theta^*(x,t) \le 0, \qquad \partial_x \theta^*(x,t) \ge 0, \quad x\in\R,\ t\in(0,T).
\]
From the optimal allocation portfolio problem point of view, the decreasing behavior of the proportion of stocks $\theta^*$ in the portfolio with respect to time can be interpreted as the necessity to be more conservative in investing when the time horizon $T$ is approached. On the other hand, increasing dependence of $\theta^*$ with respect to the accumulated property $x$ indicates an increase of the investor's risk selection preference with the increase of the accumulated property $x$.

\section{Conclusions}
In this paper we have investigated traveling wave solutions to a fully nonlinear parabolic PDE arising from the model of a stochastic dynamic optimal allocation problem. Our approach is  based on the method solving and analyzing the fully nonlinear parabolic partial differential equation which can be constructed from the corresponding Hamilton--Jacobi--Bellman equation. Under certain assumptions made on the form of the underlying stochastic process, we found a range of parameters for which a traveling wave solution can be  constructed. More precisely, we identified parameters for which the traveling wave profile is monotonically decreasing or increasing and it has positive or negative wave speed, respectively.

\end{document}